\renewcommand{\rightrightarrows}{:=}
\begin{document}

\title*{Discrete-Time Quadratic Hedging of Barrier Options in Exponential L\'{e}vy Model}

\author{Ale\v{s} \v{C}ern\'{y}}
\institute{Ale\v{s} \v{C}ern\'{y} \at Cass Business School, City University London, 106 Bunhill Row,
EC1Y 8TZ London, UK, \email{Ales.Cerny.1@city.ac.uk}}

\maketitle

\abstract{We examine optimal quadratic hedging of barrier options in a discretely
sampled exponential L\'{e}vy model that has been realistically calibrated to
reflect the leptokurtic nature of equity returns. Our main finding is that
the impact of hedging errors on prices is several times higher than the
impact of other pricing biases studied in the literature.}

\section{Introduction}

We study quadratic hedging and pricing of European barrier options with a
particular focus on the magnitude of risk of optimal hedging strategies. In
a discretely sampled exponential L\'{e}vy model, calibrated to reflect the
leptokurtic nature of equity returns, we compute the hedging error of the
optimal strategy and evaluate prices that yield reasonable\ risk-adjusted
performance for the hedger. We also confirm what traders already know
empirically, namely that the hedging risk of barrier options substantially
outstrips that of plain vanilla options.

European barrier options are derivative contracts based on standard European
calls or puts with the exception that the option becomes active (or
inactive) when the stock price hits a prespecified barrier before the
maturity of the option. Options activated in this way are called knock-ins;
those deactivated are called knock-outs.

Under the assumptions of the Black--Scholes model barrier options have been
valued first by \cite{Mer73b} and in more detail by \cite{ReiRub91}. Early
literature on numerical evaluation of barrier option prices concentrates on
slow convergence of binomial method, which is due to the difference between
the \emph{nominal barrier} specified in the option contract and the \emph{%
effective barrier }implied by the position of nodes in the stock price
lattice. This discrepancy, if not properly controlled, may lead to sizeable
mispricing, especially for options whose barrier is close to the initial
stock price. \cite{BoyLau94} and \cite{Rit95} suggest better positioning
of nodes in binomial and trinomial lattices to minimize the discrepancy
between nominal and effective barrier, whereas \cite{Deretal95}
propose interpolation between two adjacent values of the effective barrier. %
\cite{FigGao99} devise an adaptive mesh allowing for more nodes (and
shorter time steps) around the barrier.

The papers above are concerned with continuously-monitored barriers in the
Black-Scholes model. Discrete monitoring, too, can have significant impact
on option valuation, and, unlike the continuous monitoring case, does not
allow for a simple closed form pricing formula, cf. \cite{fusai.al.06}. 
A simple asymptotic correction, which works well for barriers not too
close to initial stock price, was developed by \cite{Broetal97}. For
barriers in close proximity of the stock price the Markov chain
representation of stock prices developed by \cite{Duaetal03} is more
appropriate. Other papers dealing with discrete monitoring in the log-normal
framework include \cite{fusai.recchioni.07}, \cite{Hor03a}, %
\cite{kou.03}, \cite{kuan.webber.04} and \cite{wei.98}. %
\cite{andricopoulos.al.03} describe a systematic way of handling
discretization errors by means of quadrature. There is also an extensive
literature on barrier option pricing by Monte Carlo simulation which we will
not touch upon in this paper.

The models discussed above are complete in the sense that one can devise a
self-financing trading strategy that perfectly replicates the barrier
option. In practice, however, one encounters considerable difficulties in
maintaining a delta-neutral position when close to the barrier. This has
motivated study of static replication of barrier options with plain vanilla
options. \cite{Caretal98} use the reflection principle known from
barrier option pricing combined with so-called put--call symmetry to write a
down-and-out call as a sum of a long call and a short put. Their methodology
is to some extent model-free but it only works if the market is complete and
if the aforementioned symmetry holds, requiring that risk-free rate be equal
to the dividend rate in addition to a certain symmetry of local
volatilities. \cite{Broetal01} analyze static super- and
sub-replication. The latter results are completely model-free at the cost of
generating price bounds that are potentially very wide. Other papers on
static replication include \cite{bowie.carr.94}, %
\cite{carr.chou.97}, \cite{derman.et.al.94} and %
\cite{derman.et.al.95}.

Several studies allow for parametric departures from the Black--Scholes
model. \cite{Dup01} and \cite{IlhSir06} use Bates' stochastic volatility
jump-diffusion model while \cite{Lei99} allows for IID jumps. Several
numerical approaches now exist for dealing with a wide class of (possibly
infinite variation) L\'{e}vy models, see \cite{feng.linetsky.08}, %
\cite{fang.oosterlee.09}.

The paper is organized as follows. In Section 2 we specify the theoretical
model, describe its calibration and computation of optimal strategies.
Section 3 provides economic analysis of the numerical results and Section 4
explains the relationship between barrier option prices and hedger's
risk-adjusted performance.

\section{The model}

We have at our disposal nominal log returns on FT100 equity index in the
period January 1st, 1993 to December 31st, 2002, sampled at a 1 minute
interval. Eventually we wish to say something about optimal hedging of
barrier options in a model with rebalancing frequency $\Delta \in \lbrack 5$
minutes, $1$ day$]$ and a daily monitoring of the barrier. We will assume
independence and time homogoneity of underlying asset returns at any given
rebalancing frequency. This is not to say that stochastic volatility is
unimportant in practice, instead we may think of the IID assumption as a
useful limiting case when the (unobserved) volatility state changes either
very slowly or very quickly. In this view of the world the leptokurtic
nature of returns is a source of risk that does not vanish even after
stochastic volatility has been factored in appropriately.

The analysis is performed under two self-imposed constraints. The first is
to use the available data in a non-parametric way and the second is to
perform all numerical analysis in a multinomial lattice.

In these circumstances there are essentially two strategies for calibrating
the stock price process. One option is to simply take the data series
sampled at time interval $\Delta ,$ generate a discretized distribution of
returns and construct a multinomial lattice using this distribution. An
alternative is to consider an underlying continuous-time model from which
the daily or hourly returns are extracted. \cite{EbeOzk03} argue that
equity return data display sufficient amount of time consistency for such an
approach to make sense. The underlying model is then necessarily a geometric
L\'{e}vy model, cf. Lemma 4.1 in \cite{cerny.07}. Such approach also
offers an alternative avenue to obtaining asymptotics as $\Delta $ tends to
zero by studying quadratic hedging for barrier options directly in the
underlying L\'{e}vy model -- a task which at present is still outstanding
and well beyond the scope of this paper.

\subsection{Calibration\label{Sect: Calibration}}

We take the original log return data sampled at $\Delta _{0}=$ 1 minute
intervals and construct an equidistantly spaced sequence $m_{0}<m_{1}<\ldots
<m_{N+1}$ with spacing $\delta $, such that $m_{N}$ is the highest and $m_{1}
$ the lowest log return in the sample. We set $N=1000$. We then identify the
frequency of log returns in each interval of length $\delta $ centred on $%
m_{j}$, $j=0,\ldots ,N+1$ and store this information in the vector $%
\{f_{j}\}_{j=0}^{N+1}.$ We construct an empirical L\'{e}vy measure $F_{%
\mathrm{raw}}$ as an absolutely continuous measure with respect to the
Lebesgue measure on $\mathbb{R}$%
\begin{equation*}
F_{\mathrm{raw}}\left( dx\right) =\frac{\hat{f}(x)}{\Delta _{0}}dx,
\end{equation*}%
where $\hat{f}=f$ at the points $m_{j},$ $\hat{f}=0$ outside $(m_{0},m_{N+1})
$ and elsewhere $\hat{f}$ is obtained as a linear interpolation of $f$. This
construction is motivated by an asymptotic result that links transition
probability measure of a L\'{e}vy process to its L\'{e}vy measure over short
time horizons, see \cite{Sat99}, Corollary 8.9.

In the next step we normalize the empirical characteristic function of log
returns to achieve a pre-specified annualized mean $\mu $ and volatility $%
\sigma $. Since the raw empirical L\'{e}vy process is square-integrable and
therefore a special semimartingale we will use the (otherwise forbidden)\
truncation function $h(x)=x$. We will construct the log return process by
setting 
\begin{eqnarray*}
\ln S &=&\ln S_{0}+\mu t+\frac{\sigma x}{\sigma _{\mathrm{raw}}}\ast (J^{L_{%
\mathrm{raw}}}-\nu ^{L_{\mathrm{raw}}}), \\
\sigma _{\mathrm{raw}}^{2} &=&\int_{\mathbb{R}}x^{2}F_{\mathrm{raw}}\left(
dx\right) ,
\end{eqnarray*}%
where $J^{L_{\mathrm{raw}}}$ is the jump measure of a L\'{e}vy process with L%
\'{e}vy measure $F_{\mathrm{raw}}$, $\nu ^{L_{\mathrm{raw}}}$ is its
predictable compensator and $\ast $ denotes a certain stochastic integral as
defined in \cite{js.03}. II.1.27. This yields{%
\begin{eqnarray}
\kappa (u):= &&\mu u+\int_{\mathbb{R}}\left( \mathrm{e}^{ux}-1-ux\right) F(%
\mathrm{d}x),  \label{eq: cumulant} \\
F\left( G\right) := &&\int_{\mathbb{R}}1_{G}\left( \frac{\sigma x}{\sigma _{%
\mathrm{raw}}}\right) F_{\mathrm{raw}}\left( dx\right) .  \label{eq: F}
\end{eqnarray}%
} We fix the annualized volatility of log returns at $\sigma =0.2,$ but to
check the robustness of our results we allow the mean log return to take 2
different values $\mu \in \{-0.1,0.1\}$, the first representing a bear
market and the second representing a bull market.

Instead of the non-parametric calibration procedure above one could instead
estimate a model from a convenient parametric family, such as the
generalized hyperbolic family, as outlined in \cite{EbePra02}. The
parametric route offers in some special cases an explicit expression for the
log return density at all time horizons which avoids the need for numerical
inversion of the characteristic function employed below.

\subsection{Multinomial lattice}

If $Z$ denotes the log return on time horizon $\Delta $ its characteristic
function is of the form 
\begin{equation*}
\mathrm{E}\left[ \mathrm{\exp }\left( \mathrm{i}vZ\right) \right] =\mathrm{e}%
^{\kappa (\mathrm{i}v)\Delta },
\end{equation*}%
where the cumulant generating function $\kappa $ is given by equations (\ref%
{eq: cumulant}) and (\ref{eq: F}). Provided that $Z$ has no atom at $z$ the
cumulative distribution is given by the inverse Fourier formula, see \cite{Sat99}, 2.5xi,%
\begin{equation*}
P\left( Z\leq z\right) =\mathcal{H}(c)-\frac{1}{2\pi }\lim_{l\rightarrow
\infty }\int_{-l}^{l}\frac{\mathrm{e}^{\kappa (\mathrm{i}\lambda -c)\Delta
-z\left( \mathrm{i}\lambda -c\right) }}{\mathrm{i}\lambda -c}d\lambda ,
\end{equation*}%
where $c$ is an arbitrarily chosen real number\footnote{%
In numerical calculations with a fixed value of $z$ we choose $c$ so as to
minimize the value of the integrand at $\lambda =0$, see \cite{strawderman.04}, equation (3).} 
and $\mathcal{H}$ is a step function,%
\begin{equation*}
\mathcal{H}\left( x\right) =\left\{ 
\begin{array}{cc}
0 & \text{for }x>0, \\ 
\frac{1}{2} & \text{for }x=0, \\ 
1 & \text{for }x<0,%
\end{array}%
\right. .
\end{equation*}

We now define a discretized distribution of log returns to populate our
lattice. The discretized random variable $\hat{Z}$ will take values%
\begin{equation*}
\hat{z}_{j}=j\eta \text{ with }j\in \lbrack -n_{\mathrm{down}},n_{\mathrm{up}%
}]\cap \mathbb{Z}\text{,}
\end{equation*}%
where $n_{\mathrm{down}},n_{\mathrm{up}}$ are the smallest numbers in $%
\mathbb{N}$ such that $P(Z\leq -n_{\mathrm{down}}\eta )\leq \alpha $ and $%
F_{Z}(Z\leq n_{\mathrm{up}}\eta )\geq 1-\alpha ,$ respectively. We use the
values $\eta =0.0005$ and $\alpha =10^{-5}.$ For comparison, the
corresponding value of $\eta $ in \cite{Duaetal03} with 1001 price
nodes is $0.0089$. Table \ref{Table1} shows the number of standard
deviations.

The transition probabilities corresponding to different values of log return
are defined by%
\begin{eqnarray*}
\hat{p}_{j} &\rightrightarrows &P(Z\leq \left( j+1/2\right) \eta )-P(Z\leq
\left( j-1/2\right) \eta )\text{ for }j\in \left( -n_{\mathrm{down}},n_{%
\mathrm{up}}\right) \cap \mathbb{Z}, \\
\hat{p}_{j} &\rightrightarrows &P(Z\leq \left( j+1/2\right) \eta )\text{ for 
}j=-n_{\mathrm{down}}, \\
\hat{p}_{j} &\rightrightarrows &1-P(Z\leq \left( j-1/2\right) \eta )\text{
for }j=n_{\mathrm{up}}.
\end{eqnarray*}

To limit the effect of the discretization errors arising from an arbitrary
position of the barrier we limit computations to barrier levels that satisfy 
$\ln B-\ln S\in (\mathbb{Z}+1/2)\eta $ and use interpolation otherwise.

\subsection{Optimal hedging}

In the multinomial lattice constructed above we compute the optimal hedging
strategy and the minimal hedging error according to the following theorem.

\begin{theorem}
Suppose that there is an $\mathcal{F}_{n}$-measurable contingent claim $H$
such that $\mathrm{E}[H^{2}]<\infty $. In the absence of transaction costs
the dynamically optimal hedging strategy $\varphi ${\ solving%
\begin{equation*}
\inf_{\vartheta }\mathrm{E}[\left( G_{n}^{x,\vartheta }-H\right) ^{2}],
\end{equation*}%
subject to $\vartheta _{i}$ being }$\mathcal{F}${$_{i}$-measurable with }$G$
being the value of a self-financing portfolio,{%
\begin{eqnarray*}
G_{i}^{x,\vartheta } &=&RG_{i-1}^{x,\vartheta }+\vartheta
_{i-1}(S_{i}-RS_{i-1}), \\
G_{0}^{x,\vartheta } &=&x,
\end{eqnarray*}%
is given by%
\begin{eqnarray}
\varphi _{i} &=&\xi _{i}+aR\frac{V_{i}-G_{i}^{x,\varphi }}{S_{i}},  \notag \\
V_{i} &\rightrightarrows &\mathrm{E}_{i}[(1-aX_{i+1})V_{i+1}]/(bR),
\label{eq: V_i} \\
V_{n} &\rightrightarrows &H  \label{eq: V_n} \\
\xi _{i} &\rightrightarrows &\mathrm{Cov}_{i}\left( V_{i+1},S_{i+1}\right) /%
\mathrm{Var}_{i}\left( S_{i+1}\right)   \notag \\
&=&\mathrm{E}_{i}\left[ \left( V_{i+1}-RV_{i}\right) X_{i+1}\right] /\left(
S_{i}\mathrm{E}_{i}\left[ X_{i+1}^{2}\right] \right) ,  \notag \\
X_{i} &\rightrightarrows &\exp (Z_{i})-R,  \notag \\
a &\rightrightarrows &\mathrm{E}_{i}\left[ X_{i+1}\right] /\mathrm{E}_{i}%
\left[ X_{i+1}^{2}\right] ,  \notag \\
b &\rightrightarrows &1-\left( \mathrm{E}_{i}\left[ X_{i+1}\right] \right)
^{2}/\mathrm{E}_{i}\left[ X_{i+1}^{2}\right] .  \notag
\end{eqnarray}%
The hedging performance of the dynamically optimal strategy }$\varphi $ {and
of the locally optimal strategy }$\xi ${\ is given by%
\begin{eqnarray}
\mathrm{E}\left[ \left( G_{n}^{x,\varphi }-H\right) ^{2}\right]  &=&\left(
R^{2}b\right) ^{n}\left( x-V_{0}\right) ^{2}+\varepsilon _{0}^{2}(\varphi ),
\notag \\
\mathrm{E}\left[ \left( G_{n}^{x,\xi }-H\right) ^{2}\right]  &=&\left(
R^{2}\right) ^{n-j}\left( x-V_{0}\right) ^{2}+\varepsilon _{0}^{2}(\xi ), 
\notag \\
\varepsilon _{0}^{2}(\varphi ) &=&\sum_{j=0}^{n-1}\left( R^{2}b\right)
^{n-j-1}\mathrm{E}\left[ \psi _{j}\right] ,  \label{eq: eps_0} \\
\varepsilon _{0}^{2}(\xi ) &=&\sum_{j=0}^{n-1}R^{2(n-j-1)}\mathrm{E}\left[
\psi _{j}\right] ,  \notag \\
\psi _{j} &\rightrightarrows &\mathrm{E}_{j}\left[ \left( RV_{j}+\xi
_{j}S_{j}X_{j+1}-V_{j+1}\right) ^{2}\right]   \notag \\
&=&\mathrm{Var}_{j}\left( V_{j+1}^{2}\right) -\frac{\left( \mathrm{Cov}%
_{j}\left( S_{j+1},V_{j+1}\right) \right) ^{2}}{\mathrm{Var}_{j}\left(
S_{j+1}\right) }.  \notag
\end{eqnarray}%
}
\end{theorem}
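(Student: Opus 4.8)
The plan is to proceed by backward induction on a value function, exploiting the IID structure of returns which renders $a$ and $b$ deterministic constants (independent of both $\omega$ and $i$). First I would introduce, for each $i$, the conditional minimal residual given current wealth $g$,
\begin{equation*}
J_i(g) = \operatorname*{ess\,inf}_{\vartheta_i,\ldots,\vartheta_{n-1}} \mathrm{E}_i\!\left[\left(G_n^{x,\vartheta}-H\right)^2\right],
\qquad G_i^{x,\vartheta}=g,
\end{equation*}
with terminal value $J_n(g)=(g-H)^2$, and then prove the ansatz $J_i(g)=A_i(g-V_i)^2+B_i$ with $A_i$ deterministic and $V_i,B_i$ being $\mathcal{F}_i$-measurable. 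Since the self-financing dynamics give $G_{i+1}=RG_i+\vartheta_i S_i X_{i+1}$, the Bellman step reads $J_i(g)=\operatorname*{ess\,inf}_{\vartheta_i}\mathrm{E}_i[A_{i+1}(Rg+\vartheta_i S_i X_{i+1}-V_{i+1})^2+B_{i+1}]$.

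The core computation is the one-period quadratic minimization over $\vartheta_i$. Pulling out the deterministic $A_{i+1}$ and setting the derivative in $\vartheta_i$ to zero yields the first-order condition $\mathrm{E}_i[(Rg-V_{i+1}+\vartheta_i S_i X_{i+1})X_{i+1}]=0$, whose solution I would rearrange, using $a=\mathrm{E}_i[X_{i+1}]/\mathrm{E}_i[X_{i+1}^2]$ and the definition of $\xi_i$, into the feedback form $\varphi_i=\xi_i+aR(V_i-g)/S_i$ with $g=G_i^{x,\varphi}$. Substituting the optimizer back and completing the square in $g$ is where the work lies: the $g^2$-coefficient collapses to $R^2 b$, the linear-in-$g$ term pins the vertex of the completed square at $g=V_i$ precisely when $V_i$ is defined by the recursion $V_i=\mathrm{E}_i[(1-aX_{i+1})V_{i+1}]/(bR)$, and the remainder identifies as $\psi_i$ (seen most cleanly by evaluating at $g=V_i$, where the feedback term vanishes and $\varphi_i$ reduces to $\xi_i$). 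This produces $J_i(g)=R^2 b\,A_{i+1}(g-V_i)^2+A_{i+1}\psi_i+\mathrm{E}_i[B_{i+1}]$, hence $A_i=R^2 b\,A_{i+1}$ with $A_n=1$ (so $A_i=(R^2b)^{n-i}$) and $B_i=\mathrm{E}_i[B_{i+1}]+A_{i+1}\psi_i$; unwinding the latter with $B_n=0$ gives $B_0=\varepsilon_0^2(\varphi)$ and thus the first performance identity.

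For the locally optimal strategy $\xi$ I would run the analogous recursion but without re-optimizing, propagating $\tilde J_i(g)=\mathrm{E}_i[\tilde J_{i+1}(Rg+\xi_i S_i X_{i+1})]$. The pivotal observation is that the one-step local error $RV_i+\xi_i S_i X_{i+1}-V_{i+1}$ has zero $\mathcal{F}_i$-conditional mean -- a fact that itself follows by combining the definitions of $\xi_i$ and $V_i$ together with $1-a\,\mathrm{E}_i[X_{i+1}]=b$. This orthogonality kills the cross term and decouples the propagation of $(g-V_i)^2$ from the accumulation of local errors, yielding $\mathrm{E}_i[(Rg+\xi_i S_i X_{i+1}-V_{i+1})^2]=R^2(g-V_i)^2+\psi_i$, whence $\tilde A_i=(R^2)^{n-i}$, giving $(R^2)^n(x-V_0)^2$ at $i=0$, and $\tilde B_0=\varepsilon_0^2(\xi)$.

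Finally I would dispatch the two stated algebraic equivalences. The second expression for $\psi_j$ is the residual variance of the linear regression of $V_{j+1}$ on $S_{j+1}$, obtained by completing the square in the definition and using $S_{j+1}-RS_j=S_jX_{j+1}$. The second expression for $\xi_j$ is more delicate: the covariance-over-variance form equals $\mathrm{E}_j[(V_{j+1}-RV_j)X_{j+1}]/(S_j\mathrm{E}_j[X_{j+1}^2])$ only after invoking the $V_j$-recursion, so this identity is not a pure moment rearrangement but depends on the definition of $V_j$. I expect this interplay -- and the clean collapse of the linear-in-$g$ terms onto precisely the $V_i$-recursion during the square-completion -- to be the main obstacle; everything else is bookkeeping enabled by the deterministic nature of $a$ and $b$ under the IID assumption.
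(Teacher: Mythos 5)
Your argument is correct and complete at the level of detail required: the backward-induction ansatz $J_i(g)=A_i(g-V_i)^2+B_i$, the collapse of the $g^2$-coefficient to $R^2b$, the identification of the vertex of the completed square with the $V$-recursion, and the zero-conditional-mean property of the one-step error $RV_j+\xi_jS_jX_{j+1}-V_{j+1}$ for the locally optimal strategy are exactly the right ingredients, and you correctly flag the genuinely subtle point that the two expressions for $\xi_j$ coincide only by invoking the $V_j$-recursion. The paper itself contains no proof---it only cites Theorem 3.3 of \v{C}ern\'{y} (2007)---and that reference establishes the result by essentially the same dynamic-programming recursion (stated there without the IID simplification, so that $a$ and $b$ are adapted processes and the deterministic factor $(R^2b)^{n-j-1}$ becomes a product of conditional quantities); so your route is the standard one rather than a genuinely different one. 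Incidentally, your derivation yields $(R^2)^n(x-V_0)^2$ and $\mathrm{Var}_j(V_{j+1})$ where the statement prints $(R^2)^{n-j}$ and $\mathrm{Var}_j(V_{j+1}^2)$; those are typographical slips in the theorem as quoted, not errors on your part.
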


\begin{proof}
See \cite{cerny.07}, Theorem 3.3. \qed
\end{proof}

\section{Numerical results}

We first fix the rebalancing period to $\Delta =1$ day and examine the
behaviour of hedging errors across maturities, strikes and barrier levels.We
then analyze the asymptotics of the hedging error as the rebalancing
interval $\Delta $ approaches $0,$ keeping the monitoring frequency of the
barrier constant. We do so initially for a range of strikes and barrier
levels with rebalancing interval $\Delta =1$ hour and then with fixed strike
and barrier level we examine asymptotics going down to $\Delta =5$ minutes. 

\subsection{Effect of barrier position, maturity and drift\label{Sect: UpOut}%
}

We consider an up-and-out European call and two maturity dates: 1 and 6
months. During a detailed preliminary analysis we have found that changes in
risk-free rate have a very small impact upon hedging errors and therefore we
fix the risk-free rate in all computations to $r=0$. Volatility is
normalized to $\sigma =0.2$ as explained in Section~\ref{Sect: Calibration},
while the drift takes two values $\mu \in \{-0.1,0.1\}.$ The time units
reflect trading time; specifically we assume there are 8 hours in a day and
250 days in a year. To be able to compare the size of hedging error across
maturities we measure the position of the barrier and of the striking price
relative to the initial stock price in terms of their Black--Scholes delta.

For each set of parameters we report five quantities: i)\ the Black--Scholes
price of a continuously monitored option $C$, ii) the Black--Scholes price
of a daily monitored option\footnote{%
Computation of the discretely monitored option price in Black--Scholes model
follows the methodology of \cite{Duaetal03}. Effectively, the
calculation is the same as for $V$ in the empirical model, but the
multinomial transition probabilities approximate the Black-Scholes
risk-neutral distribution $N\left( \left( r-\sigma ^{2}/2\right) \Delta
,\sigma ^{2}\Delta \right) $.} $\hat{V}$, iii)\ the standard deviation of
the hedging error in a discretely rebalanced Black-Scholes model $\hat{%
\varepsilon}_{0}$ obtained from (\ref{eq: eps_0}) using multinomial
approximation of Black-Scholes normal transition probabilities\footnote{%
Objective probability distribution of log returns in the Black-Scholes model
is $N\left( \left( \mu -\sigma ^{2}/2\right) \Delta ,\sigma ^{2}\Delta
\right) .$} with daily monitoring and daily or hourly rebalancing; iv) the
mean value process $V$ obtained from (\ref{eq: V_i}) and (\ref{eq: V_n})
using multinomial approximation of L\'{e}vy transition probabilities; and v)
the standard deviation of the unconditional expected squared hedging error $%
\varepsilon _{0}$ obtained from (\ref{eq: eps_0}) using multinomial L\'{e}vy
transition probabilities. The barrier of an up-and-out call has to be above
the stock price for the option to be still alive, we therefore parametrize
the delta of the barrier by values starting at\footnote{%
The barrier with delta of $10^{-100}$ is so high that the corresponding
results are, for all intents and purposes, indistinguishable from a plain
vanilla option.} $10^{-100}$ and going up to $0.49$. The deltas of the
striking price range between $0.01$ and $0.99$. Numerical results for
different values of $\Delta $, $T$ and $\mu $ are shown in Tables \ref%
{Table3}-\ref{Table6}.

\begin{table}[htbp] \centering%
\begin{tabular}{rrcccccc}
\hline
&  & \multicolumn{6}{c}{\textbf{barrier (delta/level)}} \\ \cline{3-8}
&  & 1E-100 & 0.01 & 0.10 & 0.30 & 0.45 & 0.49 \\ 
&  & 343.8 & 114.6 & 107.9 & 103.3 & 100.9 & 100.3 \\ 
\multicolumn{2}{c}{strike} & \multicolumn{1}{r}{} & \multicolumn{1}{r}{} & 
\multicolumn{1}{r}{} & \multicolumn{1}{r}{} & \multicolumn{1}{r}{} & 
\multicolumn{1}{r}{} \\ 
\multicolumn{1}{c}{delta} & \multicolumn{1}{c}{level} & \multicolumn{1}{r}{}
& \multicolumn{1}{r}{} & \multicolumn{1}{r}{} & \multicolumn{1}{r}{} & 
\multicolumn{1}{r}{} & \multicolumn{1}{r}{\bigstrut[b]} \\ \cline{1-2}
\multicolumn{1}{c}{\multirow{5}[2]{*}{0.01}} & \multicolumn{1}{c}{%
\multirow{5}[2]{*}{114.6}} & 0.019 & \multicolumn{1}{r}{} & 
\multicolumn{1}{r}{} & \multicolumn{1}{r}{} & \multicolumn{1}{r}{} & 
\multicolumn{1}{r}{\bigstrut[t]} \\ 
\multicolumn{1}{c}{} & \multicolumn{1}{c}{} & 0.019 & \multicolumn{1}{r}{} & 
\multicolumn{1}{r}{} & \multicolumn{1}{r}{} & \multicolumn{1}{r}{} & 
\multicolumn{1}{r}{} \\ 
\multicolumn{1}{c}{} & \multicolumn{1}{c}{} & 0.104 & \multicolumn{1}{r}{} & 
\multicolumn{1}{r}{} & \multicolumn{1}{r}{} & \multicolumn{1}{r}{} & 
\multicolumn{1}{r}{} \\ 
\multicolumn{1}{c}{} & \multicolumn{1}{c}{} & \textbf{0.020} & 
\multicolumn{1}{r}{} & \multicolumn{1}{r}{} & \multicolumn{1}{r}{} & 
\multicolumn{1}{r}{} & \multicolumn{1}{r}{} \\ 
\multicolumn{1}{c}{} & \multicolumn{1}{c}{} & \textbf{0.122} & 
\multicolumn{1}{r}{} & \multicolumn{1}{r}{} & \multicolumn{1}{r}{} & 
\multicolumn{1}{r}{} & \multicolumn{1}{r}{\bigstrut[b]} \\ \cline{1-4}
\multicolumn{1}{c}{\multirow{5}[2]{*}{0.1}} & \multicolumn{1}{c}{%
\multirow{5}[2]{*}{107.9}} & 0.268 & 0.151 &  &  &  &  \\ 
\multicolumn{1}{c}{} & \multicolumn{1}{c}{} & 0.268 & 0.172 &  &  &  &  \\ 
\multicolumn{1}{c}{} & \multicolumn{1}{c}{} & 0.286 & 0.417 &  &  &  &  \\ 
\multicolumn{1}{c}{} & \multicolumn{1}{c}{} & \textbf{0.267} & \textbf{0.170}
&  &  &  &  \\ 
\multicolumn{1}{c}{} & \multicolumn{1}{c}{} & \textbf{0.326} & \textbf{0.443}
&  &  &  &  \\ \cline{1-5}
\multicolumn{1}{c}{\multirow{5}[2]{*}{0.3}} & \multicolumn{1}{c}{%
\multirow{5}[2]{*}{103.3}} & 1.071 & 0.874 & 0.182 &  &  &  \\ 
\multicolumn{1}{c}{} & \multicolumn{1}{c}{} & 1.071 & 0.916 & 0.255 &  &  & 
\\ 
\multicolumn{1}{c}{} & \multicolumn{1}{c}{} & 0.408 & 0.734 & 0.521 &  &  & 
\\ 
\multicolumn{1}{c}{} & \multicolumn{1}{c}{} & \textbf{1.066} & \textbf{0.910}
& \textbf{0.257} &  &  &  \\ 
\multicolumn{1}{c}{} & \multicolumn{1}{c}{} & \textbf{0.469} & \textbf{0.809}
& \textbf{0.545} &  &  &  \\ \cline{1-6}
\multicolumn{1}{c}{\multirow{5}[2]{*}{0.45}} & \multicolumn{1}{c}{%
\multirow{5}[2]{*}{100.9}} & 1.900 & 1.663 & 0.608 & 0.023 &  &  \\ 
\multicolumn{1}{c}{} & \multicolumn{1}{c}{} & 1.900 & 1.716 & 0.752 & 0.052
&  &  \\ 
\multicolumn{1}{c}{} & \multicolumn{1}{c}{} & 0.430 & 0.889 & 0.839 & 0.212
&  &  \\ 
\multicolumn{1}{c}{} & \multicolumn{1}{c}{} & \textbf{1.894} & \textbf{1.709}
& \textbf{0.759} & \textbf{0.053} &  &  \\ 
\multicolumn{1}{c}{} & \multicolumn{1}{c}{} & \textbf{0.491} & \textbf{0.960}
& \textbf{0.912} & \textbf{0.223} &  &  \\ \cline{1-7}
\multicolumn{1}{c}{\multirow{5}[2]{*}{0.49}} & \multicolumn{1}{c}{%
\multirow{5}[2]{*}{100.3}} & 2.162 & 1.915 & 0.767 & 0.044 & 0.000 &  \\ 
\multicolumn{1}{c}{} & \multicolumn{1}{c}{} & 2.162 & 1.971 & 0.930 & 0.089
& 0.001 &  \\ 
\multicolumn{1}{c}{} & \multicolumn{1}{c}{} & 0.427 & 0.913 & 0.931 & 0.290
& 0.020 &  \\ 
\multicolumn{1}{c}{} & \multicolumn{1}{c}{} & \textbf{2.156} & \textbf{1.964}
& \textbf{0.938} & \textbf{0.092} & \textbf{0.001} &  \\ 
\multicolumn{1}{c}{} & \multicolumn{1}{c}{} & \textbf{0.491} & \textbf{1.007}
& \textbf{0.991} & \textbf{0.300} & \textbf{0.020} &  \\ \hline
\multicolumn{1}{c}{\multirow{5}[2]{*}{0.75}} & \multicolumn{1}{c}{%
\multirow{5}[2]{*}{96.3}} & 4.563 & 4.247 & 2.447 & 0.515 & 0.054 & 0.013 \\ 
\multicolumn{1}{c}{} & \multicolumn{1}{c}{} & 4.563 & 4.321 & 2.750 & 0.754
& 0.141 & 0.071 \\ 
\multicolumn{1}{c}{} & \multicolumn{1}{c}{} & 0.348 & 1.135 & 1.471 & 0.836
& 0.357 & 0.252 \\ 
\multicolumn{1}{c}{} & \multicolumn{1}{c}{} & \textbf{4.560} & \textbf{4.317}
& \textbf{2.770} & \textbf{0.773} & \textbf{0.146} & \textbf{0.073} \\ 
\multicolumn{1}{c}{} & \multicolumn{1}{c}{} & \textbf{0.397} & \textbf{1.222}
& \textbf{1.610} & \textbf{0.915} & \textbf{0.379} & \textbf{0.265} \\ \hline
\multicolumn{1}{c}{\multirow{5}[2]{*}{0.99}} & \multicolumn{1}{c}{%
\multirow{5}[2]{*}{87.5}} & 12.488 & 12.020 & 8.764 & 3.512 & 0.808 & 0.262
\\ 
\multicolumn{1}{c}{} & \multicolumn{1}{c}{} & 12.488 & 12.134 & 9.385 & 4.427
& 1.598 & 1.037 \\ 
\multicolumn{1}{c}{} & \multicolumn{1}{c}{} & 0.065 & 1.607 & 2.671 & 2.138
& 1.318 & 1.092 \\ 
\multicolumn{1}{c}{} & \multicolumn{1}{c}{} & \textbf{12.489} & \textbf{%
12.134} & \textbf{9.429} & \textbf{4.491} & \textbf{1.632} & \textbf{1.050}
\\ 
\multicolumn{1}{c}{} & \multicolumn{1}{c}{} & \textbf{0.077} & \textbf{1.761}
& \textbf{2.874} & \textbf{2.316} & \textbf{1.452} & \textbf{1.193} \\ \hline
\end{tabular}%
\caption{Mean value and hedging error for a  daily monitored up-and-out call option. $T=$ 1 month, $\Delta =$ 1 day, 
$\mu=0.1,  r=0$. 
For each strike and barrier level we report 5 values: i) Black--Scholes value of continuously monitored option, 
ii) mean value for normally distributed log returns and discretely (daily) monitored option, 
iii) hedging error corresponding to ii); 
iv)  the mean value process $V_0$ for the empirical distribution of log returns(discrete monitoring); 
v) standard deviation of the unconditional hedging error corresponding to iv). 
Strike and barrier levels are parametrized by the Black--Scholes delta of their position.}%
\label{Table3}%
\end{table}%

\begin{table}[htbp] \centering%
\begin{tabular}{rrrrrrrr}
\hline
&  & \multicolumn{6}{c}{\textbf{barrier (delta/level)}} \\ \cline{3-8}
&  & \multicolumn{1}{c}{1E-100} & \multicolumn{1}{c}{0.01} & 
\multicolumn{1}{c}{0.10} & \multicolumn{1}{c}{0.30} & \multicolumn{1}{c}{0.45
} & \multicolumn{1}{c}{0.49} \\ 
&  & \multicolumn{1}{c}{2071.0} & \multicolumn{1}{c}{140.5} & 
\multicolumn{1}{c}{121.2} & \multicolumn{1}{c}{108.8} & \multicolumn{1}{c}{
102.8} & \multicolumn{1}{c}{101.4} \\ 
\multicolumn{2}{c}{strike} &  &  &  &  &  &  \\ 
\multicolumn{1}{c}{delta} & \multicolumn{1}{c}{level} &  &  &  &  &  & %
\bigstrut[b] \\ \cline{1-2}
\multicolumn{1}{c}{\multirow{5}[2]{*}{0.01}} & \multicolumn{1}{c}{%
\multirow{5}[2]{*}{140.5}} & \multicolumn{1}{c}{0.046} &  &  &  &  & %
\bigstrut[t] \\ 
\multicolumn{1}{c}{} & \multicolumn{1}{c}{} & \multicolumn{1}{c}{0.046} &  & 
&  &  &  \\ 
\multicolumn{1}{c}{} & \multicolumn{1}{c}{} & \multicolumn{1}{c}{0.152} &  & 
&  &  &  \\ 
\multicolumn{1}{c}{} & \multicolumn{1}{c}{} & \multicolumn{1}{c}{\textbf{%
0.046}} &  &  &  &  &  \\ 
\multicolumn{1}{c}{} & \multicolumn{1}{c}{} & \multicolumn{1}{c}{\textbf{%
0.176}} &  &  &  &  & \bigstrut[b] \\ \cline{1-4}
\multicolumn{1}{c}{\multirow{5}[2]{*}{0.1}} & \multicolumn{1}{c}{%
\multirow{5}[2]{*}{121.2}} & \multicolumn{1}{c}{0.635} & \multicolumn{1}{c}{
0.364} & \multicolumn{1}{c}{} & \multicolumn{1}{c}{} & \multicolumn{1}{c}{}
& \multicolumn{1}{c}{} \\ 
\multicolumn{1}{c}{} & \multicolumn{1}{c}{} & \multicolumn{1}{c}{0.635} & 
\multicolumn{1}{c}{0.387} & \multicolumn{1}{c}{} & \multicolumn{1}{c}{} & 
\multicolumn{1}{c}{} & \multicolumn{1}{c}{} \\ 
\multicolumn{1}{c}{} & \multicolumn{1}{c}{} & \multicolumn{1}{c}{0.329} & 
\multicolumn{1}{c}{0.674} & \multicolumn{1}{c}{} & \multicolumn{1}{c}{} & 
\multicolumn{1}{c}{} & \multicolumn{1}{c}{} \\ 
\multicolumn{1}{c}{} & \multicolumn{1}{c}{} & \multicolumn{1}{c}{\textbf{%
0.631}} & \multicolumn{1}{c}{\textbf{0.386}} & \multicolumn{1}{c}{} & 
\multicolumn{1}{c}{} & \multicolumn{1}{c}{} & \multicolumn{1}{c}{} \\ 
\multicolumn{1}{c}{} & \multicolumn{1}{c}{} & \multicolumn{1}{c}{\textbf{%
0.381}} & \multicolumn{1}{c}{\textbf{0.740}} & \multicolumn{1}{c}{} & 
\multicolumn{1}{c}{} & \multicolumn{1}{c}{} & \multicolumn{1}{c}{} \\ 
\cline{1-5}
\multicolumn{1}{c}{\multirow{5}[2]{*}{0.3}} & \multicolumn{1}{c}{%
\multirow{5}[2]{*}{108.8}} & \multicolumn{1}{c}{2.514} & \multicolumn{1}{c}{
2.073} & \multicolumn{1}{c}{0.447} & \multicolumn{1}{c}{} & 
\multicolumn{1}{c}{} & \multicolumn{1}{c}{} \\ 
\multicolumn{1}{c}{} & \multicolumn{1}{c}{} & \multicolumn{1}{c}{2.514} & 
\multicolumn{1}{c}{2.118} & \multicolumn{1}{c}{0.522} & \multicolumn{1}{c}{}
& \multicolumn{1}{c}{} & \multicolumn{1}{c}{} \\ 
\multicolumn{1}{c}{} & \multicolumn{1}{c}{} & \multicolumn{1}{c}{0.437} & 
\multicolumn{1}{c}{1.238} & \multicolumn{1}{c}{0.711} & \multicolumn{1}{c}{}
& \multicolumn{1}{c}{} & \multicolumn{1}{c}{} \\ 
\multicolumn{1}{c}{} & \multicolumn{1}{c}{} & \multicolumn{1}{c}{\textbf{%
2.506}} & \multicolumn{1}{c}{\textbf{2.115}} & \multicolumn{1}{c}{\textbf{%
0.526}} & \multicolumn{1}{c}{} & \multicolumn{1}{c}{} & \multicolumn{1}{c}{}
\\ 
\multicolumn{1}{c}{} & \multicolumn{1}{c}{} & \multicolumn{1}{c}{\textbf{%
0.506}} & \multicolumn{1}{c}{\textbf{1.364}} & \multicolumn{1}{c}{\textbf{%
0.778}} & \multicolumn{1}{c}{} & \multicolumn{1}{c}{} & \multicolumn{1}{c}{}
\\ \cline{1-6}
\multicolumn{1}{c}{\multirow{5}[2]{*}{0.45}} & \multicolumn{1}{c}{%
\multirow{5}[2]{*}{102.8}} & \multicolumn{1}{c}{4.434} & \multicolumn{1}{c}{
3.910} & \multicolumn{1}{c}{1.475} & \multicolumn{1}{c}{0.059} & 
\multicolumn{1}{c}{} & \multicolumn{1}{c}{} \\ 
\multicolumn{1}{c}{} & \multicolumn{1}{c}{} & \multicolumn{1}{c}{4.435} & 
\multicolumn{1}{c}{3.966} & \multicolumn{1}{c}{1.622} & \multicolumn{1}{c}{
0.087} & \multicolumn{1}{c}{} & \multicolumn{1}{c}{} \\ 
\multicolumn{1}{c}{} & \multicolumn{1}{c}{} & \multicolumn{1}{c}{0.425} & 
\multicolumn{1}{c}{1.366} & \multicolumn{1}{c}{1.214} & \multicolumn{1}{c}{
0.293} & \multicolumn{1}{c}{} & \multicolumn{1}{c}{} \\ 
\multicolumn{1}{c}{} & \multicolumn{1}{c}{} & \multicolumn{1}{c}{\textbf{%
4.426}} & \multicolumn{1}{c}{\textbf{3.963}} & \multicolumn{1}{c}{\textbf{%
1.632}} & \multicolumn{1}{c}{\textbf{0.088}} & \multicolumn{1}{c}{} & 
\multicolumn{1}{c}{} \\ 
\multicolumn{1}{c}{} & \multicolumn{1}{c}{} & \multicolumn{1}{c}{\textbf{%
0.493}} & \multicolumn{1}{c}{\textbf{1.505}} & \multicolumn{1}{c}{\textbf{%
1.335}} & \multicolumn{1}{c}{\textbf{0.316}} & \multicolumn{1}{c}{} & 
\multicolumn{1}{c}{} \\ \cline{1-7}
\multicolumn{1}{c}{\multirow{5}[2]{*}{0.49}} & \multicolumn{1}{c}{%
\multirow{5}[2]{*}{101.4}} & \multicolumn{1}{c}{5.038} & \multicolumn{1}{c}{
4.493} & \multicolumn{1}{c}{1.854} & \multicolumn{1}{c}{0.114} & 
\multicolumn{1}{c}{0.000} & \multicolumn{1}{c}{} \\ 
\multicolumn{1}{c}{} & \multicolumn{1}{c}{} & \multicolumn{1}{c}{5.039} & 
\multicolumn{1}{c}{4.552} & \multicolumn{1}{c}{2.020} & \multicolumn{1}{c}{
0.157} & \multicolumn{1}{c}{0.001} & \multicolumn{1}{c}{} \\ 
\multicolumn{1}{c}{} & \multicolumn{1}{c}{} & \multicolumn{1}{c}{0.428} & 
\multicolumn{1}{c}{1.526} & \multicolumn{1}{c}{1.226} & \multicolumn{1}{c}{
0.350} & \multicolumn{1}{c}{0.029} & \multicolumn{1}{c}{} \\ 
\multicolumn{1}{c}{} & \multicolumn{1}{c}{} & \multicolumn{1}{c}{\textbf{%
5.030}} & \multicolumn{1}{c}{\textbf{4.549}} & \multicolumn{1}{c}{\textbf{%
2.032}} & \multicolumn{1}{c}{\textbf{0.160}} & \multicolumn{1}{c}{\textbf{%
0.001}} & \multicolumn{1}{c}{} \\ 
\multicolumn{1}{c}{} & \multicolumn{1}{c}{} & \multicolumn{1}{c}{\textbf{%
0.496}} & \multicolumn{1}{c}{\textbf{1.679}} & \multicolumn{1}{c}{\textbf{%
1.351}} & \multicolumn{1}{c}{\textbf{0.380}} & \multicolumn{1}{c}{\textbf{%
0.030}} & \multicolumn{1}{c}{} \\ \hline
\multicolumn{1}{c}{\multirow{5}[2]{*}{0.75}} & \multicolumn{1}{c}{%
\multirow{5}[2]{*}{91.8}} & \multicolumn{1}{c}{10.490} & \multicolumn{1}{c}{
9.812} & \multicolumn{1}{c}{5.793} & \multicolumn{1}{c}{1.296} & 
\multicolumn{1}{c}{0.163} & \multicolumn{1}{c}{0.055} \\ 
\multicolumn{1}{c}{} & \multicolumn{1}{c}{} & \multicolumn{1}{c}{10.491} & 
\multicolumn{1}{c}{9.888} & \multicolumn{1}{c}{6.094} & \multicolumn{1}{c}{
1.525} & \multicolumn{1}{c}{0.240} & \multicolumn{1}{c}{0.102} \\ 
\multicolumn{1}{c}{} & \multicolumn{1}{c}{} & \multicolumn{1}{c}{0.314} & 
\multicolumn{1}{c}{1.746} & \multicolumn{1}{c}{2.040} & \multicolumn{1}{c}{
1.013} & \multicolumn{1}{c}{0.361} & \multicolumn{1}{c}{0.234} \\ 
\multicolumn{1}{c}{} & \multicolumn{1}{c}{} & \multicolumn{1}{c}{\textbf{%
10.485}} & \multicolumn{1}{c}{\textbf{9.890}} & \multicolumn{1}{c}{\textbf{%
6.120}} & \multicolumn{1}{c}{\textbf{1.543}} & \multicolumn{1}{c}{\textbf{%
0.246}} & \multicolumn{1}{c}{\textbf{0.105}} \\ 
\multicolumn{1}{c}{} & \multicolumn{1}{c}{} & \multicolumn{1}{c}{\textbf{%
0.364}} & \multicolumn{1}{c}{\textbf{1.921}} & \multicolumn{1}{c}{\textbf{%
2.251}} & \multicolumn{1}{c}{\textbf{1.118}} & \multicolumn{1}{c}{\textbf{%
0.398}} & \multicolumn{1}{c}{\textbf{0.258}} \\ \hline
\multicolumn{1}{c}{\multirow{5}[2]{*}{0.99}} & \multicolumn{1}{c}{%
\multirow{5}[2]{*}{72.6}} & \multicolumn{1}{c}{27.452} & \multicolumn{1}{c}{
26.507} & \multicolumn{1}{c}{19.689} & \multicolumn{1}{c}{8.316} & 
\multicolumn{1}{c}{2.280} & \multicolumn{1}{c}{1.036} \\ 
\multicolumn{1}{c}{} & \multicolumn{1}{c}{} & \multicolumn{1}{c}{27.452} & 
\multicolumn{1}{c}{26.618} & \multicolumn{1}{c}{20.272} & \multicolumn{1}{c}{
9.155} & \multicolumn{1}{c}{2.972} & \multicolumn{1}{c}{1.659} \\ 
\multicolumn{1}{c}{} & \multicolumn{1}{c}{} & \multicolumn{1}{c}{0.048} & 
\multicolumn{1}{c}{2.605} & \multicolumn{1}{c}{3.244} & \multicolumn{1}{c}{
2.244} & \multicolumn{1}{c}{1.307} & \multicolumn{1}{c}{0.989} \\ 
\multicolumn{1}{c}{} & \multicolumn{1}{c}{} & \multicolumn{1}{c}{\textbf{%
27.452}} & \multicolumn{1}{c}{\textbf{26.628}} & \multicolumn{1}{c}{\textbf{%
20.326}} & \multicolumn{1}{c}{\textbf{9.217}} & \multicolumn{1}{c}{\textbf{%
3.012}} & \multicolumn{1}{c}{\textbf{1.689}} \\ 
\multicolumn{1}{c}{} & \multicolumn{1}{c}{} & \multicolumn{1}{c}{\textbf{%
0.056}} & \multicolumn{1}{c}{\textbf{2.858}} & \multicolumn{1}{c}{\textbf{%
3.589}} & \multicolumn{1}{c}{\textbf{2.494}} & \multicolumn{1}{c}{\textbf{%
1.455}} & \multicolumn{1}{c}{\textbf{1.104}} \\ \hline
\end{tabular}%
\caption{Mean value and hedging error for a  daily monitored up-and-out call option. $T=$ 6 month, $\Delta =$ 1 day, 
$\mu=0.1,  r=0$. 
For each strike and barrier level we report 5 values: i) Black--Scholes value of continuously monitored option, 
ii) mean value for normally distributed log returns and discretely (daily) monitored option, 
iii) hedging error corresponding to ii); 
iv)  the mean value process $V_0$ for the empirical distribution of log returns(discrete monitoring); 
v) standard deviation of the unconditional hedging error corresponding to iv). 
Strike and barrier levels are parametrized by the Black--Scholes delta of their position.}%
\label{Table4}%
\end{table}%

\begin{table}[htbp] \centering%
\begin{tabular}{rrcccccc}
\hline
&  & \multicolumn{6}{c}{\textbf{barrier (delta/level)}} \\ \cline{3-8}
&  & 1E-100 & 0.01 & 0.10 & 0.30 & 0.45 & 0.49 \\ 
&  & 2070.99 & 140.55 & 121.17 & 108.82 & 102.83 & 101.37 \\ 
\multicolumn{2}{c}{strike} & \multicolumn{1}{r}{} & \multicolumn{1}{r}{} & 
\multicolumn{1}{r}{} & \multicolumn{1}{r}{} & \multicolumn{1}{r}{} & 
\multicolumn{1}{r}{} \\ 
\multicolumn{1}{c}{delta} & \multicolumn{1}{c}{level} & \multicolumn{1}{r}{}
& \multicolumn{1}{r}{} & \multicolumn{1}{r}{} & \multicolumn{1}{r}{} & 
\multicolumn{1}{r}{} & \multicolumn{1}{r}{\bigstrut[b]} \\ \cline{1-2}
\multicolumn{1}{c}{\multirow{5}[2]{*}{0.01}} & \multicolumn{1}{c}{%
\multirow{5}[2]{*}{140.55}} & 0.046 & \multicolumn{1}{r}{} & 
\multicolumn{1}{r}{} & \multicolumn{1}{r}{} & \multicolumn{1}{r}{} & 
\multicolumn{1}{r}{\bigstrut[t]} \\ 
\multicolumn{1}{c}{} & \multicolumn{1}{c}{} & 0.046 & \multicolumn{1}{r}{} & 
\multicolumn{1}{r}{} & \multicolumn{1}{r}{} & \multicolumn{1}{r}{} & 
\multicolumn{1}{r}{} \\ 
\multicolumn{1}{c}{} & \multicolumn{1}{c}{} & 0.071 & \multicolumn{1}{r}{} & 
\multicolumn{1}{r}{} & \multicolumn{1}{r}{} & \multicolumn{1}{r}{} & 
\multicolumn{1}{r}{} \\ 
\multicolumn{1}{c}{} & \multicolumn{1}{c}{} & \textbf{0.047} & 
\multicolumn{1}{r}{} & \multicolumn{1}{r}{} & \multicolumn{1}{r}{} & 
\multicolumn{1}{r}{} & \multicolumn{1}{r}{} \\ 
\multicolumn{1}{c}{} & \multicolumn{1}{c}{} & \textbf{0.084} & 
\multicolumn{1}{r}{} & \multicolumn{1}{r}{} & \multicolumn{1}{r}{} & 
\multicolumn{1}{r}{} & \multicolumn{1}{r}{\bigstrut[b]} \\ \cline{1-4}
\multicolumn{1}{c}{\multirow{5}[2]{*}{0.1}} & \multicolumn{1}{c}{%
\multirow{5}[2]{*}{121.17}} & 0.635 & 0.364 &  &  &  &  \\ 
\multicolumn{1}{c}{} & \multicolumn{1}{c}{} & 0.635 & 0.387 &  &  &  &  \\ 
\multicolumn{1}{c}{} & \multicolumn{1}{c}{} & 0.236 & 0.351 &  &  &  &  \\ 
\multicolumn{1}{c}{} & \multicolumn{1}{c}{} & \textbf{0.637} & \textbf{0.387}
&  &  &  &  \\ 
\multicolumn{1}{c}{} & \multicolumn{1}{c}{} & \textbf{0.270} & \textbf{0.377}
&  &  &  &  \\ \cline{1-5}
\multicolumn{1}{c}{\multirow{5}[2]{*}{0.3}} & \multicolumn{1}{c}{%
\multirow{5}[2]{*}{108.82}} & 2.514 & 2.073 & 0.447 &  &  &  \\ 
\multicolumn{1}{c}{} & \multicolumn{1}{c}{} & 2.514 & 2.118 & 0.522 &  &  & 
\\ 
\multicolumn{1}{c}{} & \multicolumn{1}{c}{} & 0.374 & 0.612 & 0.515 &  &  & 
\\ 
\multicolumn{1}{c}{} & \multicolumn{1}{c}{} & \textbf{2.514} & \textbf{2.116}
& \textbf{0.524} &  &  &  \\ 
\multicolumn{1}{c}{} & \multicolumn{1}{c}{} & \textbf{0.438} & \textbf{0.705}
& \textbf{0.546} &  &  &  \\ \cline{1-6}
\multicolumn{1}{c}{\multirow{5}[2]{*}{0.45}} & \multicolumn{1}{c}{%
\multirow{5}[2]{*}{102.83}} & 4.434 & 3.910 & 1.475 & 0.059 &  &  \\ 
\multicolumn{1}{c}{} & \multicolumn{1}{c}{} & 4.435 & 3.966 & 1.622 & 0.087
&  &  \\ 
\multicolumn{1}{c}{} & \multicolumn{1}{c}{} & 0.419 & 0.755 & 0.798 & 0.239
&  &  \\ 
\multicolumn{1}{c}{} & \multicolumn{1}{c}{} & \textbf{4.434} & \textbf{3.962}
& \textbf{1.626} & \textbf{0.088} &  &  \\ 
\multicolumn{1}{c}{} & \multicolumn{1}{c}{} & \textbf{0.481} & \textbf{0.822}
& \textbf{0.913} & \textbf{0.268} &  &  \\ \cline{1-7}
\multicolumn{1}{c}{\multirow{5}[1]{*}{0.49}} & \multicolumn{1}{c}{%
\multirow{5}[1]{*}{101.37}} & 5.038 & 4.493 & 1.854 & 0.114 & 0.000 &  \\ 
\multicolumn{1}{c}{} & \multicolumn{1}{c}{} & 5.039 & 4.552 & 2.020 & 0.157
& 0.001 &  \\ 
\multicolumn{1}{c}{} & \multicolumn{1}{c}{} & 0.419 & 0.761 & 0.901 & 0.326
& 0.026 &  \\ 
\multicolumn{1}{c}{} & \multicolumn{1}{c}{} & \textbf{5.037} & \textbf{4.548}
& \textbf{2.024} & \textbf{0.159} & \textbf{0.001} &  \\ 
\multicolumn{1}{c}{} & \multicolumn{1}{c}{} & \textbf{0.490} & \textbf{0.876}
& \textbf{0.968} & \textbf{0.342} & \textbf{0.028} &  \\ \hline
\multicolumn{1}{c}{\multirow{5}[0]{*}{0.75}} & \multicolumn{1}{c}{%
\multirow{5}[0]{*}{91.79}} & 10.490 & 9.812 & 5.793 & 1.296 & 0.163 & 0.055
\\ 
\multicolumn{1}{c}{} & \multicolumn{1}{c}{} & 10.491 & 9.888 & 6.094 & 1.525
& 0.240 & 0.102 \\ 
\multicolumn{1}{c}{} & \multicolumn{1}{c}{} & 0.380 & 0.939 & 1.367 & 0.866
& 0.391 & 0.262 \\ 
\multicolumn{1}{c}{} & \multicolumn{1}{c}{} & \textbf{10.489} & \textbf{9.883%
} & \textbf{6.104} & \textbf{1.538} & \textbf{0.245} & \textbf{0.105} \\ 
\multicolumn{1}{c}{} & \multicolumn{1}{c}{} & \textbf{0.435} & \textbf{1.019}
& \textbf{1.564} & \textbf{0.992} & \textbf{0.419} & \textbf{0.281} \\ \hline
\multicolumn{1}{c}{\multirow{5}[1]{*}{0.99}} & \multicolumn{1}{c}{%
\multirow{5}[1]{*}{72.60}} & 27.452 & 26.507 & 19.689 & 8.316 & 2.280 & 1.036
\\ 
\multicolumn{1}{c}{} & \multicolumn{1}{c}{} & 27.452 & 26.618 & 20.272 & 
9.155 & 2.972 & 1.659 \\ 
\multicolumn{1}{c}{} & \multicolumn{1}{c}{} & 0.094 & 1.199 & 2.382 & 2.109
& 1.300 & 1.019 \\ 
\multicolumn{1}{c}{} & \multicolumn{1}{c}{} & \textbf{27.452} & \textbf{%
26.613} & \textbf{20.294} & \textbf{9.205} & \textbf{3.014} & \textbf{1.694}
\\ 
\multicolumn{1}{c}{} & \multicolumn{1}{c}{} & \textbf{0.111} & \textbf{1.371}
& \textbf{2.591} & \textbf{2.306} & \textbf{1.487} & \textbf{1.165} \\ \hline
\end{tabular}%
\caption{Mean value and hedging error for a daily monitored up-and-out call option. $T=$ 6 months, 
$\Delta =$ 1 day, $\mu=-0.1,  r=0$. 
For each strike and barrier level we report 5 values: i) Black--Scholes value of continuously monitored option, 
ii) mean value for normally distributed log returns and discretely (daily) monitored option, 
iii) hedging error corresponding to ii); 
iv)  the mean value process $V_0$ for the empirical distribution of log returns(discrete monitoring); 
v) standard deviation of the unconditional hedging error corresponding to iv). 
Strike and barrier levels are parametrized by the Black--Scholes delta of their position.}%
\label{Table5}%
\end{table}%

\begin{table}[htbp] \centering%
\begin{tabular}{rrcccccc}
\hline
&  & \multicolumn{6}{c}{\textbf{barrier (delta/level)}} \\ \cline{3-8}
&  & 1E-100 & 0.01 & 0.10 & 0.30 & 0.45 & 0.49 \\ 
&  & 342.09 & 114.57 & 107.86 & 103.25 & 100.90 & 100.31 \\ 
\multicolumn{2}{c}{strike} & \multicolumn{1}{r}{} & \multicolumn{1}{r}{} & 
\multicolumn{1}{r}{} & \multicolumn{1}{r}{} & \multicolumn{1}{r}{} & 
\multicolumn{1}{r}{} \\ 
\multicolumn{1}{c}{delta} & \multicolumn{1}{c}{level} & \multicolumn{1}{r}{}
& \multicolumn{1}{r}{} & \multicolumn{1}{r}{} & \multicolumn{1}{r}{} & 
\multicolumn{1}{r}{} & \multicolumn{1}{r}{\bigstrut[b]} \\ \cline{1-2}
\multicolumn{1}{c}{\multirow{5}[2]{*}{0.01}} & \multicolumn{1}{c}{%
\multirow{5}[2]{*}{114.63}} & 0.019 & \multicolumn{1}{r}{} & 
\multicolumn{1}{r}{} & \multicolumn{1}{r}{} & \multicolumn{1}{r}{} & 
\multicolumn{1}{r}{\bigstrut[t]} \\ 
\multicolumn{1}{c}{} & \multicolumn{1}{c}{} & 0.019 & \multicolumn{1}{r}{} & 
\multicolumn{1}{r}{} & \multicolumn{1}{r}{} & \multicolumn{1}{r}{} & 
\multicolumn{1}{r}{} \\ 
\multicolumn{1}{c}{} & \multicolumn{1}{c}{} & 0.039 & \multicolumn{1}{r}{} & 
\multicolumn{1}{r}{} & \multicolumn{1}{r}{} & \multicolumn{1}{r}{} & 
\multicolumn{1}{r}{} \\ 
\multicolumn{1}{c}{} & \multicolumn{1}{c}{} & \textbf{0.020} & 
\multicolumn{1}{r}{} & \multicolumn{1}{r}{} & \multicolumn{1}{r}{} & 
\multicolumn{1}{r}{} & \multicolumn{1}{r}{} \\ 
\multicolumn{1}{c}{} & \multicolumn{1}{c}{} & \textbf{0.074} & 
\multicolumn{1}{r}{} & \multicolumn{1}{r}{} & \multicolumn{1}{r}{} & 
\multicolumn{1}{r}{} & \multicolumn{1}{r}{\bigstrut[b]} \\ \cline{1-4}
\multicolumn{1}{c}{\multirow{5}[2]{*}{0.1}} & \multicolumn{1}{c}{%
\multirow{5}[2]{*}{107.89}} & 0.268 & 0.151 &  &  &  &  \\ 
\multicolumn{1}{c}{} & \multicolumn{1}{c}{} & 0.268 & 0.172 &  &  &  &  \\ 
\multicolumn{1}{c}{} & \multicolumn{1}{c}{} & 0.104 & 0.214 &  &  &  &  \\ 
\multicolumn{1}{c}{} & \multicolumn{1}{c}{} & \textbf{0.268} & \textbf{0.171}
&  &  &  &  \\ 
\multicolumn{1}{c}{} & \multicolumn{1}{c}{} & \textbf{0.197} & \textbf{0.303}
&  &  &  &  \\ \cline{1-5}
\multicolumn{1}{c}{\multirow{5}[2]{*}{0.3}} & \multicolumn{1}{c}{%
\multirow{5}[2]{*}{103.26}} & 1.071 & 0.874 & 0.182 &  &  &  \\ 
\multicolumn{1}{c}{} & \multicolumn{1}{c}{} & 1.071 & 0.916 & 0.255 &  &  & 
\\ 
\multicolumn{1}{c}{} & \multicolumn{1}{c}{} & 0.149 & 0.380 & 0.279 &  &  & 
\\ 
\multicolumn{1}{c}{} & \multicolumn{1}{c}{} & \textbf{1.068} & \textbf{0.912}
& \textbf{0.257} &  &  &  \\ 
\multicolumn{1}{c}{} & \multicolumn{1}{c}{} & \textbf{0.282} & \textbf{0.547}
& \textbf{0.381} &  &  &  \\ \cline{1-6}
\multicolumn{1}{c}{\multirow{5}[2]{*}{0.45}} & \multicolumn{1}{c}{%
\multirow{5}[2]{*}{100.90}} & 1.900 & 1.663 & 0.608 & 0.023 &  &  \\ 
\multicolumn{1}{c}{} & \multicolumn{1}{c}{} & 1.900 & 1.716 & 0.752 & 0.052
&  &  \\ 
\multicolumn{1}{c}{} & \multicolumn{1}{c}{} & 0.155 & 0.453 & 0.451 & 0.129
&  &  \\ 
\multicolumn{1}{c}{} & \multicolumn{1}{c}{} & \textbf{1.897} & \textbf{1.710}
& \textbf{0.758} & \textbf{0.053} &  &  \\ 
\multicolumn{1}{c}{} & \multicolumn{1}{c}{} & \textbf{0.295} & \textbf{0.651}
& \textbf{0.628} & \textbf{0.165} &  &  \\ \cline{1-7}
\multicolumn{1}{c}{\multirow{5}[1]{*}{0.49}} & \multicolumn{1}{c}{%
\multirow{5}[1]{*}{100.31}} & 2.162 & 1.915 & 0.767 & 0.044 & 0.000 &  \\ 
\multicolumn{1}{c}{} & \multicolumn{1}{c}{} & 2.162 & 1.971 & 0.930 & 0.089
& 0.001 &  \\ 
\multicolumn{1}{c}{} & \multicolumn{1}{c}{} & 0.155 & 0.478 & 0.487 & 0.164
& 0.017 &  \\ 
\multicolumn{1}{c}{} & \multicolumn{1}{c}{} & \textbf{2.159} & \textbf{1.965}
& \textbf{0.937} & \textbf{0.092} & \textbf{0.001} &  \\ 
\multicolumn{1}{c}{} & \multicolumn{1}{c}{} & \textbf{0.295} & \textbf{0.684}
& \textbf{0.681} & \textbf{0.216} & \textbf{0.018} &  \\ \hline
\multicolumn{1}{c}{\multirow{5}[0]{*}{0.75}} & \multicolumn{1}{c}{%
\multirow{5}[0]{*}{96.33}} & 4.563 & 4.247 & 2.447 & 0.515 & 0.054 & 0.013
\\ 
\multicolumn{1}{c}{} & \multicolumn{1}{c}{} & 4.563 & 4.321 & 2.750 & 0.754
& 0.141 & 0.071 \\ 
\multicolumn{1}{c}{} & \multicolumn{1}{c}{} & 0.126 & 0.595 & 0.784 & 0.451
& 0.193 & 0.137 \\ 
\multicolumn{1}{c}{} & \multicolumn{1}{c}{} & \textbf{4.562} & \textbf{4.317}
& \textbf{2.767} & \textbf{0.771} & \textbf{0.145} & \textbf{0.073} \\ 
\multicolumn{1}{c}{} & \multicolumn{1}{c}{} & \textbf{0.240} & \textbf{0.839}
& \textbf{1.103} & \textbf{0.631} & \textbf{0.265} & \textbf{0.186} \\ \hline
\multicolumn{1}{c}{\multirow{5}[1]{*}{0.99}} & \multicolumn{1}{c}{%
\multirow{5}[1]{*}{87.53}} & 12.488 & 12.020 & 8.764 & 3.512 & 0.808 & 0.262
\\ 
\multicolumn{1}{c}{} & \multicolumn{1}{c}{} & 12.488 & 12.134 & 9.385 & 4.427
& 1.598 & 1.037 \\ 
\multicolumn{1}{c}{} & \multicolumn{1}{c}{} & 0.024 & 0.883 & 1.409 & 1.144
& 0.734 & 0.606 \\ 
\multicolumn{1}{c}{} & \multicolumn{1}{c}{} & \textbf{12.489} & \textbf{%
12.132} & \textbf{9.420} & \textbf{4.484} & \textbf{1.630} & \textbf{1.049}
\\ 
\multicolumn{1}{c}{} & \multicolumn{1}{c}{} & \textbf{0.047} & \textbf{1.224}
& \textbf{1.975} & \textbf{1.604} & \textbf{1.020} & \textbf{0.837} \\ \hline
\end{tabular}
\caption{Mean value and hedging error for a  daily monitored up-and-out call option. $T=$ 1 month, 
$\Delta =$ 1 hour, $\mu=0.1,  r=0$. 
For each strike and barrier level we report 5 values: i) Black--Scholes value of continuously monitored option, 
ii) mean value for normally distributed log returns and discretely (daily) monitored option, 
iii) hedging error corresponding to ii); 
iv)  the mean value process $V_0$ for the empirical distribution of log returns(discrete monitoring); 
v) standard deviation of the unconditional hedging error corresponding to iv). 
Strike and barrier levels are parametrized by the Black--Scholes delta of their position.}%
\label{Table6}%
\end{table}

We commence with the base case parameters $\Delta =1$ day$,$ $\mu =0.1$ in
Tables \ref{Table3} and \ref{Table4}. The mean value process $V$ coincides
to a large extent with the Black-Scholes value of a discretely monitored
option. This is a striking result, since the model in which $V$ is computed
is substantially incomplete, whereas the reasoning behind $\hat{V}$ relies
on continuous rebalancing and perfect replication. For $T=1$ month (Table %
\ref{Table3}) the difference between $V$ and $\hat{V}$ is always less than
6.4 cents in absolute value, and in relative terms it is less than 3.6\%
across all strikes and barrier levels.

The difference between $V$ and $\hat{V}$ tends to diminish with increasing
maturity. For $T=6$ months (Table \ref{Table4}) the difference between $V$
and $\hat{V}$ is less than 6.1 cents in absolute value, and less than 2.7\%
in relative terms. The signs of $V-\hat{V}$ follow a pattern across strikes
and barrier levels whereby the difference tends to be negative for very high
barrier levels in combination with high strike prices, and to be positive
elsewhere.

Let us now turn to the hedging errors. Hedging errors of barrier options
(columns 4-8) behave differently to those of plain vanilla options (column
3). The hedging error of plain vanilla options are the largest at the money
and become smaller for deep-in and deep-out-of-the-money options. In
contrast, the hedging error of an up-and-out barrier option increases with
decreasing strike price. This happens because for vanilla options the only
source of the hedging error is the non-linearity of option pay-off around
the strike price, whereas for barriers the main source of the hedging errors
is the barrier itself. The lower the strike the higher the pay-off near the
barrier and the higher the hedging errors.

Consider an (at-the-money) plain vanilla option with $T=1$ month to maturity
and strike at 100.3 (see Table \ref{Table3}, column 3). The Black-Scholes
value of this option is 2.162, and the standard deviation of the
unconditional hedging error is 0.427, due to daily rebalancing. If we
consider the empirical distribution of log returns, which exhibits excess
kurtosis, the hedging error increases to 0.491. Take now a barrier option
with the same strike, and barrier at 107.9. The Black--Scholes price of the
barrier option is less than a half at 0.930 but the standard deviation of
the hedging error is more than double at 0.931. Thus if selling a plain
vanilla option at the Black--Scholes price based on historical volatility is
not a profitable enterprise, doing the same for barrier options is
positively counterproductive. This conclusion is more pronounced for longer
maturities and lower strikes, see Table \ref{Table4} ($T=6$ months).

Next we examine the effect of the change in the market direction, by
contrasting Table \ref{Table4} $(\mu =0.1)$ with Table \ref{Table5} ($\mu
=-0.1)$. The difference between the Black--Scholes no-arbitrage price of a
daily monitored barrier option $\hat{V}$ and the mean value process $V$
remains small. The mean value is higher in the bear market for plain vanilla
options (column 3) but it is generally marginally lower for barrier options,
with the exception of very low strikes in combination with very low barrier
levels. The difference in absolute value is less than two cents and less
than 1\% in relative terms (with the exception of the two vanilla option
with highest strikes). We conclude that $V\ $is largely insensitive to the
changes in $\mu $ and that the Black--Scholes price $\hat{V}$ is a very good
proxy for $V$.

The change in the market direction has a more dramatic effect on the size of
unconditional hedging errors. Recall that the standard deviation of the
unconditional error is given as a weighted average of one-period hedging
errors,%
\begin{eqnarray*}
\varepsilon _{0}^{2}(\varphi ) &=&\sum_{j=0}^{n-1}\left( R^{2}b\right)
^{n-j-1}\mathrm{E}\left[ \psi _{j}\right] , \\
\psi _{j} &=&\mathrm{Var}_{j}\left( V_{j+1}^{2}\right) -\frac{\left( \mathrm{%
Cov}_{j}\left( S_{j+1},V_{j+1}\right) \right) ^{2}}{\mathrm{Var}_{j}\left(
S_{j+1}\right) },
\end{eqnarray*}%
where $R$ and $b$ are close to $1.$ Since $V$ is largely insensitive to the
value of $\mu $ the values of $\psi $ (as a function of time, stock price
and option status) will very much coincide between the bull and the bear
market. What will be different is the \emph{expectation} of $\psi $.

The instantaneous hedging error $\psi $ arises from two non-linearities in
the option pay-off -- one around the strike price and one along the barrier.
The hedging error along the barrier tends to be more significant unless the
barrier is either very far away from the stock price or the option is just
about to be knocked out. In a bull market prices rise on average and the
barrier, being above the initial stock price, contributes more significantly
to $\mathrm{E}\left[ \psi _{j}\right] $. $\mathrm{E}\left[ \psi _{j}\right] $
will also contain more significant contribution from the strike region if
the option is initially out of the money. In contrast, in a bear market
price falls on average and $\mathrm{E}\left[ \psi _{j}\right] $ will put
less weight on the barrier region. It will contain a more significant
contribution from the strike price, if the option is in the money to begin
with. For barrier deltas equal to $10^{-100}$ and 0.49 we expect the strike
region to dominate and therefore the hedging errors in the bear market to be
larger for in-the-money options. This intuition is borne out by the
numerical results shown in Tables \ref{Table4} and \ref{Table5}.

\subsection{Asymptotics}

Let us now examine examine the effect of more frequent rebalancing by
considering $\Delta =1$ hour (Table \ref{Table6}). Although hedging now
occurs \emph{hourly} we maintain the \emph{daily monitoring} frequency of
the barrier to make the results comparable with those in Table \ref{Table3}. 

In the Black--Scholes model the \emph{standard deviation} of the hedging
error for \emph{plain vanilla} options decreases with the square root of
rebalancing interval, see \cite{BoyEma80} and \cite{Toft96}. With hourly
rebalancing this implies standard deviation equal to $\sqrt{1/8}\approx
\allowbreak 35\%$ of the daily figure (with 8-hour trading day). The
theoretical prediction turns out to be accurate, as can be seen by comparing
entries marked ii) in each row of column 3 of Tables \ref{Table3} and \ref%
{Table6} which yields the range $36\%$ to $38\%$ across all strikes.

In the empirical L\'{e}vy model the standard deviation of the unconditional
hedging error of plain vanilla options is seen to decay more slowly, see
entries marked v) in each row of column 3 of Tables \ref{Table3} and \ref%
{Table6}. With hourly rebalancing it is in the range 60\%-62\% of the daily
rebalancing figures across all strikes. In this instance the higher
frequency of hedging is (partially) offset by higher kurtosis of hourly
returns. \cite{cerny.book.09}, Section 13.7, derives an approximation of
the hedging error for leptokurtic returns and shows that rebalancing
interval must be multiplied by kurtosis minus one to obtain the correct
scaling of hedging errors. In our case Table \ref{Table1} shows the kurtosis
of daily returns is 3.72 and the kurtosis of hourly returns is 8.73, thus we
should expect hourly errors to equal $\sqrt{1/8\times 7.73/2.72}\approx
60\%\,$of the daily errors which matches the actual range of 60\% to 62\%
mentioned earlier.

Table \ref{Table1} compares the kurtosis of returns and log returns in the
calibrated L\'{e}vy model with the kurtosis achieved in its multinomial
lattice approximation. The last two columns show the number of standard
deviations of one-period log return (rounded up to the nearest quarter)
corresponding to the $10^{-5}$ and $1-10^{-5}$ quantiles of the one-period
log return distribution. This is the range represented by the lattice
approximation of the L\'{e}vy process. As an aside, we observe that the
lattice begins to struggle to approximate the kurtosis of the L\'{e}vy
process well at the 5-minute rebalancing interval.

\begin{table}[tbp] \centering%
\begin{tabular}{lllllll}
\hline
& \multicolumn{2}{l}{kurtosis lattice} & \multicolumn{2}{l}{kurtosis L\'{e}vy
} &  &  \\ 
$\Delta $ & log & level & log & level & $\frac{n_{\mathrm{down}}\eta }{%
\sigma }$ & $\frac{n_{\mathrm{up}}\eta }{\sigma }$ \\ \hline
\textbf{5 min} & \textbf{69.11} & \textbf{69.05} & \textbf{72.54} & \textbf{%
72.51} & \textbf{27} & \textbf{25} \\ 
15 min & 25.77 & 25.76 & 26.18 & 26.17 & 17 & 16 \\ 
30 min & 14.42 & 14.42 & 14.59 & 14.59 & 12.75 & 12 \\ 
\textbf{1 hr} & \textbf{8.73} & \textbf{8.73} & \textbf{8.79} & \textbf{8.80}
& \textbf{10} & \textbf{9.5} \\ 
2 hr & 5.86 & 5.87 & 5.90 & 5.90 & 8 & 7.75 \\ 
4 hr & 4.44 & 4.44 & 4.45 & 4.45 & 6.75 & 6.75 \\ 
\textbf{1 day (8 hr)} & \textbf{3.72} & \textbf{3.72} & \textbf{3.72} & 
\textbf{3.73} & \textbf{5.75} & \textbf{5.75} \\ \hline
\end{tabular}%
\caption{Kurtosis as a function of rebalancing interval}\label{Table1}%
\end{table}%

For \emph{barrier options} (columns 4-8 of Tables \ref{Table3} and \ref%
{Table6}) the Black-Scholes situation is more complicated because part of
the error is caused by the barrier itself and this part has different $%
\Delta $-asymptotics. Conjecturing that the barrier contributes an error
whose \emph{variance }is proportional to the square root of rebalancing
interval, see \cite{gobet.temam.01}, and assuming that fraction $\alpha $
of the error is generated by the strike region and the rest by the barrier,
the approximate expression for the hourly total error as a fraction of daily
error would read%
\begin{equation}
\sqrt{0.35\alpha +\sqrt{0.35}(1-\alpha )}.  \label{eq: heuristic1}
\end{equation}%
For barrier options in columns 4 and 5 of Tables \ref{Table3} and \ref%
{Table6} the percentage reduction in hedging error in the Black-Scholes
model stands between 51\% and 54\% which implies $\alpha $ values in formula
(\ref{eq: heuristic1}) between 0.25 and 0.4. Variability of $\alpha $ is to
be expected since the relative importance of the two types of errors will
depend on barrier and strike levels.

One can conjecture that for barrier options in the presence of excess
kurtosis the formula (\ref{eq: heuristic1}) will remain the same, only the
time scaling factor will be adjusted for excess kurtosis from $0.35$ to $0.6$
as in the case of plain vanilla options. We thus expect the ratio of hourly
to daily errors in the L\'{e}vy model to be 
\begin{equation}
\sqrt{0.6\alpha +\sqrt{0.6}(1-\alpha )}.  \label{eq: heuristic2}
\end{equation}%
With $\alpha $ in the range 0.25 to 0.4 heuristic (\ref{eq: heuristic2})
predicts error reduction in the range 71\%-74\% while the actual figures
from columns 4 and 5 of Tables \ref{Table3} and \ref{Table6} yield the range
68\%-70\%, which for practical purposes is a perfectly adequate
approximation.

Table \ref{Table2} provides 5-minute error data for one specific
strike/barrier combination corresponding to $\alpha =0.25.$ It reports the
hedging error $\varepsilon _{0}$ obtained from (\ref{eq: eps_0}) and the
mean value $V_{0}$ obtained from (\ref{eq: V_i}) and (\ref{eq: V_n}) using
the multinomial approximation of the empirical L\'{e}vy process and
analogous quantitities $\hat{\varepsilon}_{0}$ and $\hat{V}_{0}$ obtained
from a multinomial approximation of the Black-Scholes model.

The Black-Scholes 5-minute time scaling factor is $1/(8\times 12)=1/96=0.0104
$ and the heuristic (\ref{eq: heuristic1}) yields error reduction ratio of 
$$\sqrt{0.0104\times 0.25+\sqrt{0.0104}(1-0.25)}\approx 28\%,$$ 
while in Table \ref{Table2} we find this ratio to be $0.142/0.519\approx 27\%$. The
leptokurtic empirical 5-minute distribution leads to the time scaling factor
of $68.05/2.72/96\approx \allowbreak 0.26$ hence the $5$-minute empirical
error is predicted to be 
$$\sqrt{0.26\times 0.25+\sqrt{0.26}(1-0.25)}\approx 67\%$$ of the daily error. 
The actual figure in Table \ref{Table2}
is $0.33/0.548\approx \allowbreak 60\%$. For practical purposes this is again an acceptable approximation.

\begin{table}[tbp] \centering%
\begin{tabular}{lllll}
\hline
& \multicolumn{2}{l}{Black-Scholes} & \multicolumn{2}{l}{empirical L\'{e}vy}
\\ 
$\triangle $ & $\hat{V}_{0}$ & $\hat{\varepsilon}_{0}$ & $V_{0}$ & $%
\varepsilon _{0}$ \\ \hline
5 min & 0.2525 & 0.142 & 0.2548 & 0.330 \\ 
15 min & 0.2535 & 0.191 & 0.2556 & 0.345 \\ 
30 min & 0.2535 & 0.231 & 0.2558 & 0.360 \\ 
1 hr & 0.2536 & 0.282 & 0.2559 & 0.380 \\ 
2 hr & 0.2537 & 0.345 & 0.2560 & 0.421 \\ 
4 hr & 0.2537 & 0.424 & 0.2560 & 0.470 \\ 
8 hr & 0.2538 & 0.519 & 0.2561 & 0.548 \\ \hline
\end{tabular}%
\caption{Mean value $V_0$ and unconditional standard deviation of the hedging error $\varepsilon_0$ 
for parameter values $T=1, S_0=100, B=107.9,K=103.3$}\label{Table2}%
\end{table}%

Our exploratory analysis above points to two open questions in this area of
research: 1) calculation of explicit asymptotic expression for hedging error
of barrier options in discretely rebalanced Black-Scholes model analogous to
the formula of \cite{gobet.temam.01} for path-independent options; 2)
asymptotic formula for hedging error of barrier options in a continuously
rebalanced L\'{e}vy model with small jumps. There is a good reason to
believe that 1)\ and 2) are closely linked because similar link has already
been established for plain vanilla options, see \cite{cerny.al.13}.

\section{Sharpe ratio price bounds}

In this model, as in reality, the sale of an option and subsequent hedging
is a risky activity. If one sells an option at its Black-Scholes value
corresponding to historical volatility one effectively enters into an
investment with zero mean and non-zero variance. In addition this investment
is by construction uncorrelated with the stock returns. To make option
trading profitable the trader must aim for a certain level of risk-adjusted
returns, which implies selling derivatives above their Black--Scholes value.
The question then arises as to what is a sensible measure of risk-adjusted
returns and what is a sensible level of compensation for the residual risk.

\cite{cerny.book.09} proposes to measure profitability of investment by its
certainty equivalent growth rate adjusted for investor's risk aversion. When
this measure is applied to mean-variance preferences, it yields a one-to-one
relationship with the ex-ante Sharpe ratio of the investment strategy. Thus,
in the present context, the unconditional Sharpe ratio appears as a natural
measure of risk-adjusted returns.

It is well known that the square of maximal Sharpe ratio available by
trading in two uncorrelated assets equals the sum of squared Sharpe ratios
of the individual assets. Since the hedged option position is uncorrelated
with the stock we can regard the Sharpe ratio of the hedged position as a
meaningful measure of \emph{incremental} performance (i.e. performance over
and above optimal investment in the stock).

Suppose that the trader targets a certain level of annualized incremental
Sharpe ratio $h$ (say $h=0.5$). Assuming that he or she can sell the option
at price $\tilde{C}$ above the mean value $V_{0}$ the resulting Sharpe ratio
of the hedged option position equals%
\begin{equation*}
\frac{e^{rT}(\tilde{C}-V_{0})}{\varepsilon _{0}}.
\end{equation*}%
If $T$ is maturity in years the trader should look for a price $\tilde{C}$
such that%
\begin{equation*}
\frac{e^{rT}(\tilde{C}-V_{0})}{\varepsilon _{0}}=h\sqrt{T},
\end{equation*}%
which yields%
\begin{equation}
\tilde{C}=V_{0}+e^{-rT}h\sqrt{T}\varepsilon _{0}.
\label{eq: nonlinear pricing}
\end{equation}

For plain vanilla options the price adjustment corresponding to annualized
incremental Sharpe ratio of 1 gives rise to a gap between implied volatility
and historical volatility of about 150 basis points, robustly across
maturities and strikes. If the same price adjustment is performed for
barrier options its magnitude is as important as, and often several times
dominates, the price adjustment due to discrete monitoring. The fraction $%
\frac{\sqrt{T}\varepsilon _{0}}{V_{0}}$ is reported in Tables \ref{Table7} and \ref{Table8}.

\begin{table}[htbp] \centering%
\begin{tabular}{rrcrrrrr}
\hline
      &       & \multicolumn{6}{c}{\textbf{barrier (delta/level)}} \bigstrut\\
\cline{3-8}\multicolumn{2}{c}{strike} & 1E-100 & \multicolumn{1}{c}{0.01} & \multicolumn{1}{c}{0.10} & \multicolumn{1}{c}{0.30} & \multicolumn{1}{c}{0.45} & \multicolumn{1}{c}{0.49} \bigstrut[t]\\
\multicolumn{1}{c}{delta} & \multicolumn{1}{c}{level} & 343.8 & \multicolumn{1}{c}{114.6} & \multicolumn{1}{c}{107.9} & \multicolumn{1}{c}{103.3} & \multicolumn{1}{c}{100.9} & \multicolumn{1}{c}{100.3} \bigstrut[b]\\
\hline
\multicolumn{1}{c}{0.01} & \multicolumn{1}{c}{114.6} & \textbf{177\%} &       &       &       &       &  \bigstrut[t]\\
\multicolumn{1}{c}{0.10} & \multicolumn{1}{c}{107.9} & \textbf{35\%} & \multicolumn{1}{c}{\textbf{75\%}} &       &       &       &  \\
\multicolumn{1}{c}{0.30} & \multicolumn{1}{c}{103.3} & \textbf{13\%} & \multicolumn{1}{c}{\textbf{26\%}} & \multicolumn{1}{c}{\textbf{61\%}} &       &       &  \\
\multicolumn{1}{c}{0.45} & \multicolumn{1}{c}{100.9} & \textbf{7\%} & \multicolumn{1}{c}{\textbf{16\%}} & \multicolumn{1}{c}{\textbf{35\%}} & \multicolumn{1}{c}{\textbf{121\%}} &       &  \\
\multicolumn{1}{c}{0.49} & \multicolumn{1}{c}{100.3} & \textbf{7\%} & \multicolumn{1}{c}{\textbf{15\%}} & \multicolumn{1}{c}{\textbf{30\%}} & \multicolumn{1}{c}{\textbf{94\%}} & \multicolumn{1}{c}{\textbf{491\%}} &  \\
\multicolumn{1}{c}{0.75} & \multicolumn{1}{c}{96.3} & \textbf{3\%} & \multicolumn{1}{c}{\textbf{8\%}} & \multicolumn{1}{c}{\textbf{17\%}} & \multicolumn{1}{c}{\textbf{34\%}} & \multicolumn{1}{c}{\textbf{75\%}} & \multicolumn{1}{c}{\textbf{104\%}} \\
\multicolumn{1}{c}{0.99} & \multicolumn{1}{c}{87.5} & \textbf{0.2\%} & \multicolumn{1}{c}{\textbf{4\%}} & \multicolumn{1}{c}{\textbf{9\%}} & \multicolumn{1}{c}{\textbf{15\%}} & \multicolumn{1}{c}{\textbf{26\%}} & \multicolumn{1}{c}{\textbf{33\%}} \bigstrut[b]\\
\hline
\end{tabular}%
\caption{Risk premium as a percentage of mean value for up-and-out call. $T=$ 1 month, 
$\Delta =$ 1 day, $\mu=0.1,  r=0$. Strike and barrier levels are parametrized by the Black--Scholes 
delta of their position.}\label{Table7}%
\end{table}%

\begin{table}[htbp] \centering%
\begin{tabular}{rrcrrrrr}
\hline
      &       & \multicolumn{6}{c}{\textbf{barrier (delta/level)}} \bigstrut\\
\cline{3-8}\multicolumn{2}{c}{strike} & 1E-100 & \multicolumn{1}{c}{0.01} & \multicolumn{1}{c}{0.10} & \multicolumn{1}{c}{0.30} & \multicolumn{1}{c}{0.45} & \multicolumn{1}{c}{0.49} \bigstrut[t]\\
\multicolumn{1}{c}{delta} & \multicolumn{1}{c}{level} & 2071.0 & \multicolumn{1}{c}{140.5} & \multicolumn{1}{c}{121.2} & \multicolumn{1}{c}{108.8} & \multicolumn{1}{c}{102.8} & \multicolumn{1}{c}{101.4} \bigstrut[b]\\
\hline
\multicolumn{1}{c}{0.01} & \multicolumn{1}{c}{114.6} & \textbf{111\%} &       &       &       &       &  \bigstrut[t]\\
\multicolumn{1}{c}{0.10} & \multicolumn{1}{c}{107.9} & \textbf{17\%} & \multicolumn{1}{c}{\textbf{55\%}} &       &       &       &  \\
\multicolumn{1}{c}{0.30} & \multicolumn{1}{c}{103.3} & \textbf{6\%} & \multicolumn{1}{c}{\textbf{19\%}} & \multicolumn{1}{c}{\textbf{43\%}} &       &       &  \\
\multicolumn{1}{c}{0.45} & \multicolumn{1}{c}{100.9} & \textbf{3\%} & \multicolumn{1}{c}{\textbf{11\%}} & \multicolumn{1}{c}{\textbf{24\%}} & \multicolumn{1}{c}{\textbf{103\%}} &       &  \\
\multicolumn{1}{c}{0.49} & \multicolumn{1}{c}{100.3} & \textbf{3\%} & \multicolumn{1}{c}{\textbf{11\%}} & \multicolumn{1}{c}{\textbf{19\%}} & \multicolumn{1}{c}{\textbf{69\%}} & \multicolumn{1}{c}{\textbf{665\%}} &  \\
\multicolumn{1}{c}{0.75} & \multicolumn{1}{c}{96.3} & \textbf{1\%} & \multicolumn{1}{c}{\textbf{6\%}} & \multicolumn{1}{c}{\textbf{11\%}} & \multicolumn{1}{c}{\textbf{21\%}} & \multicolumn{1}{c}{\textbf{47\%}} & \multicolumn{1}{c}{\textbf{71\%}} \\
\multicolumn{1}{c}{0.99} & \multicolumn{1}{c}{87.5} & \textbf{0\%} & \multicolumn{1}{c}{\textbf{3\%}} & \multicolumn{1}{c}{\textbf{5\%}} & \multicolumn{1}{c}{\textbf{8\%}} & \multicolumn{1}{c}{\textbf{14\%}} & \multicolumn{1}{c}{\textbf{19\%}} \bigstrut[b]\\
\hline
\end{tabular}%
\caption{Risk premium as a percentage of mean value for up-and-out call. $T=$ 6 months, 
$\Delta =$ 1 day, $\mu=0.1,  r=0$. Strike and barrier levels are parametrized by the Black--Scholes 
delta of their position.}\label{Table8}%
\end{table}%

\section{Conclusions}

One obvious conclusion to draw from formula (\ref{eq: nonlinear pricing}) is
that prices in an incomplete market are likely to contain both a linear $%
(V_{0})$ and a non-linear $(\varepsilon _{0})$ component. The prevailing
market practice is to use just the linear part $V_{0}$ for calibration which
often requires distorting the historical distribution of returns to match
observed market prices across strikes and maturities. For example, in their
calibration of plain vanilla option prices \cite{Madetal99} report
historical annualized excess kurtosis at 0.002 but risk-neutral excess
kurtosis at 0.18 which is a level that the variance-optimal martingale
measure that generates $V_{0}$ simply cannot reach. This phenomenon gets
worse in the presence of exotic options. Formula (\ref{eq: nonlinear pricing}%
) offers a flexible alternative that may offer better fit of model dynamics
to historical return distributions and at the same time provide closer
calibration to market prices thanks to the non-linear term $\varepsilon _{0}$
which has very different characteristics for different types of exotic
options, as we have seen in the previous section.

\def\cprime{$'$}

\end{document}